\newtheorem{theorem}{Theorem}
\newcommand{\composed}[3]{#1{#2{#3}}}
\newcommand{\double}[2]{\kern.15ex#1{\kern-.15ex#1{\kern-.15ex#2}}}
\newcommand{\themodifier}{}  
\newcommand{\ironed}{\bar}
\newcommand{\constrained}{\hat}
\newcommand{\thestage}{an}           
\newcommand{\interimfont}[1]{#1}
\newcommand{\interim}[1]{{\setstagefont{\interimfont}#1}}
\newcommand{\stagefont}{\mathrm}
\newcommand{\setstagefont}[1]{\renewcommand{\stagefont}{#1}}
\newcommand{\decoration}{\noaccents}
\newcommand{\stagemodifier}[2]{\composed%
{\ifthenelse{\equal{#2}{}}{}{\csname #2\endcsname}}%
{\ifthenelse{\equal{#1}{}}{}{\csname #1\endcsname}}%
}
\newcommand{\thestagemodifier}{\stagemodifier{\thestage}{\themodifier}}
\newcommand{\ensuredecoration}{\renewcommand{\decoration}{\thestagemodifier}}
\newcommand{\optimized}[1]{#1\opt}
\newcommand{\differentiated}[1]{#1'}
\newcommand{\tagged}[2]{{#2}^{#1}}
\newcommand{\primedarg}[1]{#1\primed}
\newcommand{\noaccents}[1]{#1}
\newcommand{\additemsubscript}{}
\newcommand{\itemsubscript}{}
\newcommand{\drop}[1]{}
\newcommand{\definevariant}[3]{\expandafter\newcommand\expandafter{\csname \expandafter\drop\string#1#2\endcsname}{}
\expandafter\DeclareRobustCommand\expandafter{\csname \expandafter\drop\string#1#2\endcsname}{#3}}
\newcommand{\usevariant}[2]{\csname \expandafter\drop\string#1#2\endcsname}
\newcommand{\agind}[1][\agent]{_{\smash{#1\itemsubscript}}}
\newcommand{\agith}[1][\agent]{_{(\smash{#1})\smash{\itemsubscript}}}
\newcommand{\minusagind}[1][\agent]{_{\smash{-#1}}}
\newcommand{\newagentvar}[3][\decoration]{%
\definevariant{#2}{}{#1{\stagefont{#3}}\additemsubscript}%
\definevariant{#2}{i}{#1{\stagefont{#3}}\agind}
\definevariant{#2}{ith}{#1{\stagefont{#3}}\agith}
\definevariant{#2}{s}{\kern.1ex#1{\kern-.1ex\boldsymbol{\stagefont{#3}}}}
\definevariant{#2}{smi}{\kern.1ex#1{\kern-.1ex\boldsymbol{\stagefont{#3}}}\minusagind} 
}%
\newcommand{\itind}[1][\itm]{_{#1}}
\newcommand{\minusitind}[1][\itm]{_{-#1}}
\newcommand{\newitemvar}[3][\decoration]{%
\definevariant{#2}{}{#1{\stagefont{#3}}}%
\definevariant{#2}{j}{#1{\stagefont{#3}}\itind}
\definevariant{#2}{s}{\kern.1ex#1{\kern-.1ex\boldsymbol{\stagefont{#3}}}}
\definevariant{#2}{smi}{\kern.1ex#1{\kern-.1ex\boldsymbol{\stagefont{#3}}}\minusitind} 
}%
\newcommand{\newagentvarsm}[4]{\newagentvar[\stagemodifier{#3}{#4}]{#1}{#2}}
\newcommand{\theval}{v}
\newagentvarsm{\val}{\theval}{an}{none}
\newagentvarsm{\Val}{\theval}{interim}{forquant}
\newagentvarsm{\type}{t}{an}{none}
\newagentvarsm{\othertype}{s}{an}{none}
\newcommand{\therev}{R}
\newcommand{\thepricerev}{P}
\newagentvarsm{\rev}{\therev}{interim}{nomodifer}
\newagentvarsm{\marg}{\therev}{interim}{differentiated}
\newagentvarsm{\rawrev}{\thepricerev}{interim}{}
\newagentvarsm{\pricerev}{\thepricerev}{interim}{}
\newagentvarsm{\rawmarg}{\thepricerev}{interim}{differentiated}
\newagentvarsm{\pricemarg}{\thepricerev}{interim}{differentiated}
\newcommand{\thevirt}{\varphi}
\newcommand{\thecumvirt}{\Phi}
\newagentvarsm{\virt}{\thevirt}{interim}{}
\newagentvarsm{\cumvirt}{\thecumvirt}{interim}{}
\newagentvarsm{\qvirt}{\thevirt}{interim}{}
\newagentvarsm{\ivirt}{\thevirt}{interim}{ironed}
\newagentvarsm{\icumvirt}{\thecumvirt}{interim}{ironed}
\newagentvarsm{\dist}{F}{interim}{}
\newagentvarsm{\dens}{f}{interim}{}
\newagentvarsm{\hazard}{h}{interim}{}
\newagentvarsm{\cumhazard}{H}{interim}{}
\newcommand{\thealloc}{x}
\newagentvar{\alloc}{\thealloc}
\newagentvarsm{\zee}{z}{an}{}
\newagentvarsm{\wye}{y}{an}{}
\newagentvarsm{\cee}{c}{an}{}
\newagentvarsm{\delt}{\delta}{an}{}
\newcommand{\thefalloc}{z}
\newagentvar{\falloc}{\thefalloc}
\newagentvarsm{\epfalloc}{\thefalloc}{expost}{}
\newcommand{\thequant}{q}
\newagentvarsm{\quant}{\thequant}{an}{none}
\newagentvarsm{\exquant}{\thequant}{an}{constrained}
\newagentvarsm{\critquant}{\thequant}{an}{constrained}  
\newagentvarsm{\monoq}{\thequant}{an}{optconstrained}
\newcommand{\theqalloc}{y}
\newcommand{\thecumalloc}{Y}
\newagentvarsm{\qalloc}{\theqalloc}{interim}{none}
\newagentvarsm{\cumalloc}{\thecumalloc}{interim}{none}
\newagentvarsm{\calloc}{\theqalloc}{interim}{constrained}
\newagentvarsm{\cumcalloc}{\thecumalloc}{interim}{constrained}
\newagentvarsm{\iqalloc}{\theqalloc}{interim}{ironed}
\newagentvarsm{\icumalloc}{\thecumalloc}{interim}{ironed}
\newagentvar{\qprice}{\price}
\newagentvar{\qrev}{R}
\newagentvarsm{\excalloc}{\theqalloc}{expost}{constrained}
\newagentvarsm{\exalloc}{\theqalloc}{expost}{forquantile}
\newagentvarsm{\extalloc}{\thealloc}{expost}{forvalue}
\newagentvarsm{\exfalloc}{\thefalloc}{expost}{forvalue}
\newagentvarsm{\inducedrev}{\therev}{interim}{primedarg}
\newagentvar{\typespace}{{\cal T}}
\newagentvar{\typesubspace}{S}
\newcommand{\theoutcome}{w}
\newagentvarsm{\outcome}{\theoutcome}{an}{none}
\newagentvarsm{\toutcome}{\theoutcome}{interim}{forval}
\newagentvar{\outcomespace}{{\cal W}}
\newcommand{\served}[1]{#1^1}
\newcommand{\nonserved}[1]{#1^0}
\newcommand{\alloced}[1]{#1^{\alloc}}
\newcommand{\allocedi}[1]{#1^{\alloci}}
\newcommand{\theprice}{p}
\newagentvar{\price}{\theprice}
\newcommand{\thepay}{p}
\newagentvarsm{\pay}{\thepay}{an}{}
\newagentvarsm{\vpay}{\thepay}{interim}{forval}
\newagentvarsm{\tpay}{\thepay}{interim}{forval}
\newagentvarsm{\epvpay}{\thepay}{expost}{forval}
\newagentvarsm{\talloc}{\thealloc}{interim}{forval}
\newagentvarsm{\valloc}{\thealloc}{interim}{forval}
\newagentvarsm{\eptalloc}{\thealloc}{expost}{forval}
\newagentvarsm{\epvalloc}{\thealloc}{expost}{forval}
\newagentvarsm{\epqalloc}{\theqalloc}{expost}{forquant}
\newcommand{\theaalloc}{y}
\newagentvarsm{\aalloc}{\theaalloc}{interim}{forbid}
\newagentvarsm{\epaalloc}{\theaalloc}{expost}{forbid}
\newagentvarsm{\apay}{\thepay}{interim}{forbid}
\newagentvarsm{\epapay}{\thepay}{expost}{forbid}
\newagentvarsm{\balloc}{\thealloc}{interim}{forbid}
\newagentvarsm{\bpay}{\thepay}{interim}{forbid}
\newagentvarsm{\epballoc}{\thealloc}{expost}{forbid}
\newagentvarsm{\epbpay}{\thepay}{expost}{forbid}
\newagentvar{\act}{a}
\newagentvar{\bidspace}{A}
\newagentvar{\actspace}{A}
\newagentvarsm{\critval}{\theval}{an}{constrained}
\newagentvarsm{\epvcritval}{\theval}{expost}{constrained}
\newagentvarsm{\monop}{\val}{an}{optconstrained}
\newcommand{\thercalloc}{y}
\newagentvarsm{\rcalloc}{\thercalloc}{an}{none}
\newagentvarsm{\optrcalloc}{\thercalloc}{expost}{optimized}
\newagentvarsm{\biddist}{G}{interim}{none}
\newagentvarsm{\biddens}{g}{interim}{none}
\newagentvarsm{\critbid}{\bid}{an}{constrained}
\newagentvarsm{\cbid}{B}{an}{constrained}
\newagentvar{\gfunc}{\vartheta}
\newagentvar{\block}{C}
\newagentvarsm{\pee}{p}{interim}{none}
\newagentvarsm{\util}{u}{an}{}
\newagentvarsm{\vutil}{u}{interim}{forvalue}
\newagentvarsm{\butil}{u}{interim}{forbid}
\newcommand{\thebid}{b}
\newagentvar{\bid}{\thebid}
\newagentvarsm{\strat}{\thebid}{interim}{}
\newagentvar{\ints}{{\cal I}}
\newcommand{\feasibles}{{\cal X}}
\newagentvar{\wal}{w}
\newitemvar{\pos}{j}
\newitemvar{\weight}{w}
\newitemvar{\udtype}{\type}
\newitemvar{\udalloc}{\alloc}
\newitemvar{\udprice}{\price}
\newagentvar{\mech}{{\cal M}}
\newagentvar{\alg}{{\cal A}}
\newagentvarsm{\budget}{B}{an}{none}
\newagentvarsm{\rawprice}{\theprice}{expost}{forbid}
\newagentvarsm{\rawalloc}{\thealloc}{expost}{forbid}
\newcommand{\reals}{{\mathbb R}}
\newagentvarsm{\trans}{\sigma}{interim}{none}
\newagentvar{\demandset}{S}
\newcommand{\opt}{^{\star}}
\newcommand{\primed}{^\dagger}
\newagentvarsm{\epricerev}{\thepricerev}{an}{}
\newagentvarsm{\erev}{\therev}{an}{}
\newagentvarsm{\emarg}{\therev}{an}{differentiated}
\newagentvarsm{\epricemarg}{\thepricerev}{an}{differentiated}
\newagentvarsm{\efalloc}{\theqalloc}{interim}{none}
\newagentvarsm{\evirt}{\thevirt}{an}{none}
\newagentvar{\gap}{\delta}
\newcommand{\given}{\,\mid\,}
\newcommand{\prob}[2][]{\text{\bf Pr}\ifthenelse{\not\equal{}{#1}}{_{#1}}{}\!\left[{\def\givenn{\middle|}#2}\right]}
\newcommand{\expect}[2][]{\text{\bf E}\ifthenelse{\not\equal{}{#1}}{_{#1}}{}\!\left[{\def\givenn{\middle|}#2}\right]}
\newcommand{\tparen}{\big}
\newcommand{\tprob}[2][]{\text{\bf Pr}\ifthenelse{\not\equal{}{#1}}{_{#1}}{}\tparen[{\def\given{\tparen|}#2}\tparen]}
\newcommand{\texpect}[2][]{\text{\bf E}\ifthenelse{\not\equal{}{#1}}{_{#1}}{}\tparen[{\def\given{\tparen|}#2}\tparen]}
\newcommand{\sprob}[2][]{\text{\bf Pr}\ifthenelse{\not\equal{}{#1}}{_{#1}}{}[#2]}
\newcommand{\sexpect}[2][]{\text{\bf E}\ifthenelse{\not\equal{}{#1}}{_{#1}}{}[#2]}
\title{\vspace{-4ex}Online Bipartite Matching via Smoothness}
\author{Jason Hartline}
\date{}
\let\section=\paragraph
\begin{document}

\maketitle

\vspace{-2ex}

The online bipartite matching problem has offline buyers desiring to
be matched to online items.  A bipartite graph governs which buyers
are interested in which items.  The RANKING algorithm of
\citet{KVV-90} assigns the buyers a uniform random priority and then,
as items arrive and their edges to remaining unmatched buyers become
known, matches each item to the interested buyer with the highest
priority (if any exists).  They give a tight analysis of the RANKING
algorithm showing that it is an $e/(e-1)$ approximation; i.e., it
matches a 1.58 fraction of the number of matched buyers in the optimal
offline matching.

Recently, \citet{EFFS-21} reproved the upperbound using an economic
analysis.  They imagine that the buyers all have value 1, they bid
i.i.d.\ from a specific distribution, and consider the greedy-by-bid
mechanism, where each item is matched when it arrives to the feasible
buyer with the highest bid.  They show that for these bids that
welfare obtained (which is equal to the number of matched buyers) is a
1.58 approximation to the optimal welfare.\footnote{Technically, this
discussion has swapped the role of buyers and items from
\citet{EFFS-21}.}  As the outcome produced by this bidding process is
identical to that of RANKING, the same approximation bound holds for
it.

In fact, the analysis of online bipartite matching of \citet{EFFS-21} is a
``smoothness proof'' where the bid distribution is the same as the one
defined in the smoothness analysis of \citet{ST-13}.  Moreover, it can
be interpreted as combining $\lambda = 1-1/e$ {\em value covering},
which is known to hold for single-dimensional buyers and randomized
auctions, and $\mu = 1$ {\em revenue covering} \citep{HHT-14}, which
is shown to hold for the greedy-by-bid winner-pays-bid online matching
mechanism.  Note that value covering is a property of
single-dimensional buyers in winner-pays-bid mechanisms and has
nothing to do with the underlying feasibility setting.  Thus, the
essential new consequence of \citet{EFFS-21} is that online bipartite
matching is $\mu=1$ revenue covered.  A number of old and new
observations follow from this perspective.

The remainder of this brief note is organized as follows.  First, the
definition of revenue covering is reviewed along with the main
analysis that shows that the greedy-by-bid winner-pays-bid online bipartite
matching mechanism is $\mu=1$ revenue covered.  Second, value covering
is reviewed with a discussion of the inherent optimality of oblivious
bidding strategies.  Third, the approach is applied to the analysis of
the RANKING algorithm.  This brief note concludes with a discussion of
old and new consequences of this reorganization of the analysis.

\paragraph{Revenue Covering.}
A single dimensional buyer $i$ has a value $\vali$ and linear utility
$\utili = \vali \alloci - \pricei$ for obtaining a good or service
with probability $\alloci$ and making payment $\payi$.  A feasibility
constraint is defined by $\feasibles \subset [0,1]^n$.  The optimal
welfare is the maximum over feasible allocations $\allocs =
(\alloci[1],\ldots,\alloci[n]) \in \feasibles$ of $\sum_i \vali
\alloci$.  A winner-pays-bid mechanism is fully specified by its ex
post bid allocation rule $\ballocs : \reals^n \to [0,1]^n$ and on input bids $\bids = (\bidi[1],\ldots,\bidi[n])$ its
revenue is $\sum_i \bidi\,\balloci(\bids)$.  Any randomized
winner-pays-bid mechanism and fixed bids of other buyers induces a
distribution over critical bids $\critbidi$ where buyer $i$ wins when
her bid $\bidi$ exceeds $\critbidi$.  A winner-pays-bid mechanism
$\ballocs$ is $\mu \geq 1$ revenue covered if on all profiles of bids
$\bids$ and all other feasible allocations $\allocs \in \feasibles$,
\begin{align}
  \label{eq:rc}
  \mu \sum\nolimits_i \bidi\, \balloci(\bids) &\geq \sum\nolimits_i \alloci \critbidi.
\end{align}
Here the left-hand side is $\mu$ times the revenue of the mechanism
and the right-hand side is the ``welfare of critical bids'' according
to feasible allocation $\allocs \in \feasibles$.

\paragraph{Revenue Covering of Online Bipartite Matching.}

The main observation of this note is reinterpreting the analysis of
\citet{EFFS-21} as showing that the greedy online matching mechanism, which assigns each
item when it arrives to the remaining interested buyer with the highest bid, is $\mu=1$
revenue covered.

\begin{theorem}
  \label{t:revenue-covering}
  For any fixed ordering of items, greedy-by-bid online matching is
  $\mu=1$ revenue covered.
\end{theorem}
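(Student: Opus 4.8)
The plan is to establish, for every bid profile $\bids$, the stronger pointwise statement that the revenue of greedy matching dominates the critical‑bid surplus of \emph{every} matching; this gives \eqref{eq:rc} with $\mu=1$. Fix the item order, run greedy on $\bids$, and write $\beta_t$ for the bid of the buyer it matches to item $t$ (with $\beta_t=0$ if $t$ goes unmatched), so the mechanism's revenue $\sum_i\bidi\,\balloci(\bids)$ equals $\sum_t\beta_t$. Since $\feasibles$ for online bipartite matching is the convex hull of the buyer‑indicator vectors of matchings, the entire theorem reduces to the single edge inequality
\[
  \critbidi \;\le\; \beta_t \qquad\text{for every edge $(i,t)$ of the bipartite graph.}
\]
Indeed, for any matching $M$ the items it uses are distinct and all $\beta_t\ge 0$, so $\sum_{(i,t)\in M}\critbidi \le \sum_{(i,t)\in M}\beta_t \le \sum_t\beta_t$, and since every $\allocs\in\feasibles$ is a convex combination of such matchings, $\sum_i\alloci\,\critbidi\le\sum_t\beta_t$. (In revenue‑covering language, $q_t=\beta_t$ is the dual certificate: charge each critical bid to the item that would serve it.)

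For the edge inequality, run greedy on the same profile with buyer $i$ deleted from the graph, and let $\beta^{-i}_t$ be the winning bid it produces at item $t$. Two facts suffice. \textbf{(i) Deletion never helps an item:} $\beta^{-i}_t\le\beta_t$ for all $t$. If $i$ is unmatched in the original run, the two runs in fact coincide — by induction over the item order the unmatched‑buyer sets agree except for the phantom presence of $i$, whom greedy never selects. If $i$ is matched, to an item $\tau$, the runs still coincide up to $\tau$ (before $\tau$, $i$ loses every adjacent item and so is inert), at $\tau$ the deleted run awards the item to $i$'s runner‑up, and thereafter I maintain the invariant that the deleted run's unmatched set equals the original run's minus \emph{at most one} ``displaced'' buyer. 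Processing the remaining items: at each one, either the displaced buyer is not the winner greedy would select, so both runs select the same winner (same $\beta$) and carry the displaced buyer forward, or it is, so the deleted run falls back to a runner‑up of weakly smaller bid, which becomes the new displaced buyer; either way $\beta^{-i}_t\le\beta_t$ and the invariant persists. \textbf{(ii) The critical bid is at most $\beta^{-i}_t$:} if $i$ were to bid strictly above $\beta^{-i}_t$ and never win, the coincidence argument of (i) would force the run with this $i$ present to agree with the $i$‑deleted run, but then when item $t$ arrives $i$ is unmatched and outbids its winner — a contradiction — so $i$ wins something and $\critbidi\le\beta^{-i}_t$. Chaining (ii) and (i) yields $\critbidi\le\beta^{-i}_t\le\beta_t$, completing the proof.

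I expect the only real work to be the displaced‑buyer invariant in (i): one must verify that the cascade of re‑matchings triggered by deleting $i$ leaves at most one buyer ``in surplus'' in the deleted run at every step (so the deleted run's unmatched set is never a proper superset of the original's) and that this forces each item's winning bid only to weakly fall. The critical‑bid characterization in (ii) and the reduction to a per‑edge inequality are bookkeeping, and any fixed rule for breaking ties among equal bids leaves every inequality above unaffected.
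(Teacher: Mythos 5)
Your proof is correct and follows essentially the same route as the paper: your chain $\critbidi \le \beta^{-i}_t \le \beta_t$ is exactly the paper's $\critbidi \le q \le r_j$, with your displaced-buyer invariant playing the role of the paper's induction that removing buyer $i$ only lowers each item's price, and your coincidence argument matching the paper's observation that a bid above the deleted-run price guarantees buyer $i$ is matched. The only cosmetic difference is that you make the per-edge inequality and the extension to convex combinations of matchings explicit, which the paper leaves implicit.
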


\begin{proof}
For buyers $A$, items $B$, and the bipartite graph given by $(A,B,E)$;
consider any feasible matching $M \subset E$ (not necessarily the
matching produced by an online allocation). Let $\ballocs$ be the
allocation rule of online bipartite matching for offline buyers, any
fixed arrival order of items, and where each item on arrival is sold
to the highest remaining bidder that is feasible according to $E$.
Fix bids $\bids$ and critical bids $\critbids$ (induced by
$\ballocs$).  Denote by $r_j$ the revenue from item $j$ (equal to the bid of
the buyer who is matched to $j$ or equal to 0 if $j$ is unmatched).  To be shown:
\begin{align}
  \sum\nolimits_i \bidi \, \balloci(\bidi) \geq \sum\nolimits_{j : (i,j) \in M}r_j \geq
  \sum\nolimits_{i : (i,j) \in M} \critbidi.
\end{align}
Notice that the right-hand side is $\sum_i \alloci \critbidi$ for
feasible $\allocs$ defined with $\alloci$ as an indicator variable
for buyer $i$ being matched in $M$.  Thus, online matching is $\mu=1$
revenue covered.

The first inequality is true because the sum of winning bids is the
revenue which is at least the sum of revenues from a subset of items.

For the second inequality, imagine buyer $i$ does not participate in
the market and define $q$ to be the price that item $j$ would be sold
at in the online matching (or 0 if item $j$ remains unsold).  Observe
two results.  First, it can be seen by induction that when buyer $i$
is in the market the price fetched by item $j$ is only higher, i.e.,
$r_j \geq q$.  Item $j$ is sold to the highest bidder remaining when
it arrives.  Adding a buyer to the consideration set of buyers when an
item arrives increases the item's price and either the original buyer
buys and this new buyer is available for later items or the new buyer
buys and the original buyer is available for later items.  Second, if
buyer $i$ bids more than $q$ then buyer $i$ will certainly be matched,
because if unmatched for all other items, buyer $i$ would be matched
to $j$.  Thus, the minimum bid that $i$ can make and still be matched
by $\ballocs$ is $\critbidi \leq q$.  Combining these two observations
and summing over all $(i,j) \in M$ the inequalities $r_j \geq q \geq
\critbidi$ gives the second inequality.
\end{proof}

\paragraph{Best response, oblivious response, and value covering.}

Consider a buyer with value $\val$ bidding in a winner-pays-bid
auction and any, not necessarily equilibrium, distribution of opposing
bids.  This buyer wins whenever her bid exceeds a potentially
randomized critical bid $\critbid$ that is induced by the distribution
of bids of other buyers and the auction rules.  The {\em value
  covering} bound shows that either the best-response utility $\util$
of the buyer or their expected critical bid $\critbid$ is large in
comparison to their value $\val$:
\begin{align}
  \label{eq:value-covering}
  \sexpect{\util} + \sexpect{\critbid} \geq \lambda\,\val
\end{align}
If inequality~\eqref{eq:value-covering} holds for {\em oblivious
  response}, i.e., a fixed randomized bidding strategy, then it also
holds for {\em best response}, i.e., the best bid for the distribution
of $\critbid$.  \citet{ST-13} identify an oblivious response that
yields the tight bound $\lambda = 1-1/e$.  Moreover, in worst case oblivious
response and best response give the same $\lambda$.  Inequality~\eqref{eq:value-covering} can be seen as defining a zero-sum game
between the buyer (playing $\bid$) and nature (playing $\critbid$)
where $\lambda$ is the value of the game.  An important property of
zero-sum games is that each player has an oblivious randomized
strategy that guarantees them at least the value of the game for any
strategy of the opponent (and is tight in worst case).  The oblivious strategy
that guarantees the $\lambda = 1-1/e$ value is to bid according to the
density function $g(z) = 1/(\val - z)$ on interval $[0,(1-1/e)\,\val]$.
The utility from such a bid for any fixed $\critbid$ is (integrating
the winner-pays-bid utility of $\val - z$ over winning bids $z \geq
\critbid$):
\begin{align*}
  \expect{\util} &= \int_{\critbid}^{(1-1/e)\,\val}(\val - z)\,g(z)\,dz =
  \int_{\critbid}^{(1-1/e)\,\val} 1\,dz =
      (1-1/e)\,\val - \critbid.
\end{align*}
For fixed $\critbid$ the equation is linear, thus it also holds in
expectation over randomized $\critbid$.  Rearranging gives
inequality~\eqref{eq:value-covering} with $\lambda = 1-1/e$.

\citet{HHT-14} show that value covering and revenue covering combine
to give a bound on the welfare in equilibrium. Value covering implies
$\sexpect{\utili} + \sexpect{\critvali}\,\alloci \geq \lambda\,\vali\,\alloci$
for each buyer $i$ and any $\alloci \in [0,1]$.  Taking $\allocs$ as the
welfare optimal allocation, summing over all buyers, applying $\mu$
revenue covering, and rearranging with $\mu \geq 1$ gives the theorem (\Cref{t:hht}, below).

\begin{theorem}[\citealp{HHT-14}]
  \label{t:hht}
  The outcome from buyers playing $\lambda$-value-covered strategies
  in a $\mu$-revenue-covered mechanism is a $\mu/\lambda$
  approximation to the optimal welfare.
\end{theorem}

\paragraph{Online Algorithms.}

The value covering analysis is algorithmic in the sense that it gives
an algorithm the buyer could follow that guarantees its bound.  In any
equilibrium, the best response of the buyer is only better.  While the
best response of a buyer is based on beliefs that the buyer
possesses about the environment and other buyers' strategies, the
bidding strategy in the analysis is independent of these details.
Combining the oblivious bidding strategy with a mechanism gives an
algorithm that obtains the same welfare guarantee as is obtained from
equilibrium.

\citet{KVV-90} consider the unweighted bipartite matching; their
theorem can be rederived from $\lambda=1-1/e$ value covering, $\mu =
1$ revenue covering, and \Cref{t:hht} with valuation profile $\vals =
(1,\ldots,1)$.  The oblivious bidding strategy combined with the
greedy-by-bid algorithm for online bipartite matching gives an
identical outcome to the RANKING algorithm.

\begin{theorem}
  The RANKING algorithm (equivalently, oblivious bidding with
  greedy-by-bid) gives an $e/(e-1)$ approximation to online bipartite
  matching.
\end{theorem}

\paragraph{Discussion.}

Separating the proof of $\mu=1$ revenue covering from the calculus of
the $\lambda=1-1/e$ value covering of \citet{ST-13} clarifies the
analysis of \citet{EFFS-21}.  The $\mu=1$ revenue covering result
affords other conclusions as well.  For example, when buyers have
asymmetric values, oblivious bidding with greedy-by-bid continues to
give an $e/(e-1)$ approximation.  These bounds that hold for the
RANKING algorithm also hold in equilibrium of the winner-pays-bid
greedy-by-bid online matching algorithm (when buyers have beliefs over
the arrival order of items).  It is also known that in deterministic
environments, e.g., where the arrival order is known to the buyers,
that best response in winner-pays-bid mechanisms is $\lambda=1$ value
covered.  Thus, the pure Nash equilibria of the greedy-by-bid
mechanism are perfectly efficient.  In contrast, the greedy algorithm
without randomization or equilibrium bidding is a 2-approximation. (In
fact, this well-known greedy 2-approximation result can be seen as a
consequence of $\mu=1$ revenue covering and $\lambda = 1/2$ value
covering that is obtained by the oblivious ``bid half of value''
bidding strategy.)

\citet{MSVV-05} applied online matching to the problem of selling
advertisements on keyword searches, a.k.a., the AdWords problem.
Their analysis focuses on the algorithmic challenge, explicitly
leaving the problem of advertiser incentives for future work.  In
contrast, the analysis above suggests that it might be better to
simply run the greedy-by-bid online matching algorithm and charge the
winners their bids; the strategic bidding of the advertisers ensures that
the welfare obtained is good.

\bibliographystyle{apalike}
\bibliography{references}

\end{document}